\theoremstyle{plain}
\newlength{\within@wd}
\DeclareRobustCommand{\within}[3][c]{%
  \ifmmode%
    \mathchoice%
      {\within@math\displaystyle{#1}{#2}{#3}}%
      {\within@math\textstyle{#1}{#2}{#3}}%
      {\within@math\scriptstyle{#1}{#2}{#3}}%
      {\within@math\scriptscriptstyle{#1}{#2}{#3}}%
  \else%
    \within@do{#1}{#2}{#3}%
  \fi%
}
\newcommand{\within@math}[4]{\within@do{#2}{$\m@th#1#3$}{$\m@th#1#4$}}
\newcommand{\within@do}[3]{%
  \settowidth{\within@wd}{#2}%
  \makebox[\within@wd][#1]{#3}%
}
\DeclareMathOperator{\dist}{{\mathit{d}}}
\DeclareMathOperator{\parent}{\operatorname{parent}}
\DeclareMathOperator{\Oh}{{O}}
\newcommand{\minR}{{\rho_{\mathrm{min}}}}
\newcommand{\maxR}{{\rho_{\mathrm{max}}}}
\newcommand{\Areas}{\mathcal{A}}
\newcommand{\Clients}{\mathcal{C}}
\newcommand{\Pairs}{{\Pi}}
\newcommand{\cround}[1]{{\ceil{\log_5{#1}}}}
\newcommand{\fround}[1]{{\floor{\log_5{#1}}}}
\DeclarePairedDelimiterX{\floor}[1]{\lfloor}{\rfloor}{#1}
\DeclarePairedDelimiterX{\ceil}[1]{\lceil}{\rceil}{#1}
\DeclarePairedDelimiterX{\prn}[1]{(}{)}{#1}
\DeclarePairedDelimiterX{\brc}[1]{\{}{\}}{#1}
\DeclarePairedDelimiterX{\brk}[1]{[}{]}{#1}
\DeclarePairedDelimiterX{\abs}[1]{\lvert}{\rvert}{#1}
\DeclarePairedDelimiterX{\tuple}[1]{\langle}{\rangle}{#1}
\DeclarePairedDelimiterX{\set}[2]{\{}{\}}{#1\,\delimsize|\,\mathopen{}#2}
\title{Dynamic clustering to minimize the sum of radii\footnote{The research leading to these results has received funding from the European Research Council
under the European Union's Seventh Framework Programme (FP/2007-2013) / ERC Grant
Agreement no. 340506.}}
\author[1]{Monika Henzinger}
\author[2]{Dariusz Leniowski}
\author[3]{Claire Mathieu}
\affil[1]{University of Vienna, Faculty of Computer Science, Vienna, Austria \\
  \texttt{monika.henzinger@univie.ac.at}}
\affil[2]{University of Vienna, Faculty of Computer Science, Vienna, Austria \\
  \texttt{dariusz.leniowski@univie.ac.at}}
\affil[3]{ENS, CNRS, {\tiny{PSL Research University}}, Paris, France \\
  \texttt{cmathieu@di.ens.fr}}
\authorrunning{M. Henzinger, D. Leniowski and C. Mathieu} 
\subjclass{G.1.6 Optimization}
\keywords{dynamic algorithm, clustering, approximation, doubling dimension}
\begin{document}
\maketitle
\begin{abstract}
  In this paper, we study the problem of opening centers to cluster 
a set of clients in a metric space so as to minimize the sum of 
the costs of the centers and of the cluster radii, in a dynamic 
environment where clients arrive and depart, and the solution must 
be updated efficiently while remaining competitive with respect 
to the current optimal solution. 
We call this \emph{dynamic sum-of-radii clustering} problem. 

We present a data structure that maintains a solution whose cost is within a constant factor of the cost of an optimal solution in metric spaces with bounded doubling dimension and whose
worst-case update time is logarithmic in the parameters of the problem.
\end{abstract}

\newcommand{\fmin}{f_{\mathrm{min}}}
\section{Introduction}

The main goal of clustering is to partition a set of objects into
homogeneous and well separated subsets (clusters). 
Clustering techniques have long been used in a wide 
variety of application areas, 
see for example the excellent surveys~\cite{schaeffer2007,Hansen1997}.

There are several ways to model clustering. 
Among them, the problem of sum-of-radii (or sum-of-diameter) clustering 
has been extensively studied: the clients are located in a metric space 
and one must open facilities to minimize facility opening cost 
(or keep the number of open facilities limited to at most $k$) 
plus the sum of the cluster radii (or, in other applications, 
cluster diameters).
To give a concrete example, imagine a telecommunications 
agency setting up mobile towers that provide wireless access
to selected clients, incurring costs for setting up towers as well as 
for configuring a tower to serve the customers lying within a certain 
distance, where that latter contribution to the cost increases with 
the maximum distance served by the tower. 

Assume the number of facilities is limited to $k$. 
For sum-of-diameter clustering, Doddi, Marathe, Ravi, 
Taylor and Widmayer~\cite{Doddi2000} prove hardness of approximation 
to within better than a factor of 2.  
More recently, NP-hardness was proved for the sum-of-radii problem 
even for shortest path metrics on weighted planar 
graphs~\cite{proietti2006}, or, in the case of sum-of-diameters, 
even for metrics of constant doubling dimension~\cite{gibson2008}. 
Turning to approximation algorithms, 
Charikar and Panigraphy~\cite{Charikar2004} 
design and analyze an $\Oh(1)$ approximation algorithm 
for sum-of-radii and for sum-of-diameter clustering with $k$ clusters.  
They start from a linear-programming relaxation, 
use a primal-dual type approach and, along the way, 
design a bicriteria algorithm. 
They also design an incremental algorithm that handles arrivals of clients,
merging clusters as needed so that at any time the clustering has 
$O(k)$ clusters and the cost is $\Oh(1)$ times the optimal cost for 
$k$ clusters.

There have been many other papers on sum-of-radii or sum-of-diameters 
clustering. A few papers focus on the problems of partitioning 
the clients into a constant number of clusters as quickly as 
possible~\cite{hansen1987,proietti2006}. 
Some papers concern themselves with bicriteria results such 
as~\cite{Bandyapadhyay2015}.
Consider the special case of a metric that is Euclidean in two dimensions. 
Lev-Tov and Peleg~\cite{Lev-Tov2005} give a polynomial time approximation 
scheme (PTAS) for the related problem of covering input clients by 
a min-cost set of disks centered at the servers, 
where both clients and potential servers are located in 
the Euclidean plane and are part of the input. 
Recently, Behsaz and Salavatipour~\cite{Behsaz2015} gave a PTAS 
for the minimum sum of diameters problem on the plane with Euclidean 
distances. 
See also~\cite{Alt2006,Gibson2008b} for other work on the 
two-dimensional geometric setting.

The problems are complicated by situations where the set of clients 
may change over time, 
for example documents in a very large database 
that must be efficiently searchable and maintained.
This then leads to various models: 
online~\cite{Csirik2013,Fotakis2011}, incremental\cite{Charikar2004}, 
streaming, or dynamic. 
The dynamic setting, where clients may not only arrive but also depart, 
has been empirically studied at least since 1993~\cite{can1993}, 
and is the focus of the present paper, 
with the joint goals of maintaining clusterings whose objective value 
is close to optimal, and of updating the cluster quickly after each event.

This paper can be interpreted as part of a recent focus on exploiting 
primal-dual techniques 
in the dynamic setting. 
In the online setting (where new elements arrive but never depart), 
primal-dual techniques are extremely successful~\cite{Buchbinder2009}. 
Initially it seemed that such techniques 
were inherently restricted to settings with arrivals only, 
and no departures, 
but there recently has been exciting progress 
to handle the dynamic setting as well, 
starting with~\cite{BhattacharyaHI15,Bhattacharya2015} and continuing 
with~\cite{Baswana2015,Bhattacharya2016,Solomon2016,Bhattacharya2017}. 
Of particular notice for us is recent work by 
Gupta et al~\cite{gupta2016} 
for the set cover problem (and Bhattacharya et al~\cite{BhattacharyaCH16},
restricted to vertex cover) in the dynamic setting where elements 
arrive and depart.%

In this paper, we study the {\em dynamic sum-of-radii clustering problem}, 
defined as follows:
The original input consists of a (possibly infinite) set $V$ of potential 
{\em clients} or {\em points}, 
a finite set $F\subseteq V$ of {\em facilities} with an {\em opening cost} 
$f_j$ for each facility $j$, and a metric $d$ over $V$. 
For the online input, a set $C$ of live clients evolves over time: 
at each timestep $t$, either a new client arrives and is added to $C$, 
or a client from $C$ departs and is removed from $C$, 
or a query is made for the approximate cost of an optimal solution 
({\em cost query}), or a query asks for the entire current solution 
({\em solution query}). 
For the output, at each timestep the algorithm maintains 
a set of {\em open} facilities, 
each open facility $j$ being associated to a radius $R_j$, 
such that every client of $C$ is {\em covered}, i.e. 
belongs to some open ball $B(j,R_j)$, 
and the goal is to minimize the \emph{cost}, namely, 
the sum over open facilities $j$ of $f_j+R_j$.

The { dynamic sum-of-radii clustering problem} 
can actually be interpreted as a special case of dynamic set cover: 
our metric space is the universe, our clients are the elements, 
and for each center $c$ and radius $r$, 
the ball $B(c,r)$ defines a set of cost $r$ 
consisting of those clients covered by the ball. 
The dynamic set cover algorithm from~\cite{gupta2016}, 
specialized to our setting, maintains competitive ratio 
$\Oh(\log n)$ and has update time $\Oh(f\log n)$, 
where $f$ is the maximum number of sets containing any element; 
\cite{gupta2016} also gives an $\Oh(f^3)$ 
approximation in time $\Oh(f^2)$ for set cover, and 
for the related dynamic $k$-coverage problem, 
they give a constant approximation fully dynamic algorithm with 
$\Oh(f \log n)$ update time. 

The {\em doubling dimension} of a metric space $(V,d)$ is said 
to be bounded by $\kappa$ if any ball $B(x,r)$ in $(V,d)$ 
can be covered by $2^\kappa$ balls of radius $r/2$~\cite{Krauthgamer2004}. 
For example, $D$-dimensional Euclidean space has doubling dimension 
$\Theta(D)$.

In this paper we give an algorithm for constant doubling dimension, 
that maintains a solution 
whose cost is within a constant factor of the optimal cost, 
and with logarithmic update times for arrival 
or departure events. 
The algorithm answers cost queries in constant time and solution 
queries in linear time in the size of the solution, 
up to a factor of $\log (W/f_{min})$, where $W$ is the diameter of 
the metric space and 
$f_{min}$ is the minimum opening cost of any facility (which is strictly positive without loss of generality, since facilities of cost 0 can remain open at all times).
The universe is known ahead of time, 
and only the collection of active clients changes dynamically. 
Note that for the above mentioned algorithm by~\cite{gupta2016} 
$f$ would be $\Theta(2^{2 \kappa} \log (W/f_{min}))$.

\begin{theorem}\label{thm:main}
There exists an algorithm for the dynamic sum-of-radii clustering problem, 
when clients and facilities live in a metric space with doubling dimension 
$\kappa$, such that at every timestep the solution has cost at most 
$\Oh(2^{2 \kappa})$ times the cost of an optimal solution at that time, 
and such that the update time is $\Oh(2^{6 \kappa} \log (W/f_{min}))$,
where $W$ is the diameter of the space and in the current number of 
clients and $f_{min}$ is the minimum opening cost.
A cost query can be answered in constant time, a solution query in time 
$\Oh( s \log (W/f_{min}))$, where $s$ is   the size of the output.
\end{theorem}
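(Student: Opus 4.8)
The plan is to maintain a primal--dual pair for the natural LP relaxation of sum-of-radii clustering, discretized to a logarithmic number of radius scales and pruned --- using the doubling dimension --- to a near-linear family $\Pairs$ of candidate balls organized into a bounded-degree rooted forest, the \emph{dependency tree} $\DependencyTree$, so that each client event forces the recomputation of only a small, shallow portion of $\DependencyTree$. Concretely, I would round every radius to a power of a fixed base (say $5$), so a facility $j$ contributes only $\Oh(\log(W/\fmin))$ relevant scales $\rho_\ell$ (for $\rho_\ell<f_j$ the opening cost dominates and the scale is clamped); for each scale I would take a $\rho_\ell$-net of $\Facilities$, breaking ties towards cheaper facilities, so that these nets are nested across scales up to a constant and, by the doubling property, a facility lies within $\Oh(\rho_\ell)$ of only $\Oh(2^{\Oh(\kappa)})$ net points of a given scale. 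The candidate set $\Pairs$ is the family of (net point, scale) pairs, so $|\Pairs|=\Oh(|\Facilities|\log(W/\fmin))$, every ball $B(j,r)$ is contained in a candidate ball of cost $\Oh(f_j+r)$, and taking the parent in $\DependencyTree$ of a scale-$\ell$ pair to be the scale-$(\ell{+}1)$ net point covering it gives $\Oh(2^{\Oh(\kappa)})$ children per node and depth $\Oh(\log(W/\fmin))$.

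\emph{The canonical primal--dual and its cost.} Conceptually I would run the standard sum-of-radii primal--dual on $\Pairs$: raise all client duals $\alpha_i$ uniformly, mark a candidate pair $\isactive$ when the $\alpha$-mass of the live clients in its ball reaches its cost, and at that instant open it ($\isopen$) and freeze the clients it covers; a fixed total order on $\Pairs$ (scale, then facility order) breaks ties, making the whole process a deterministic function of the live set $\Clients$, with $\Open$ the resulting open pairs. For the approximation bound I would combine two facts: (i) a dual $\alpha$ feasible for the candidate instance is, after scaling by the $\Oh(1)$ discretization factor, feasible for the original dual, so $\sum_i\alpha_i=\Oh(\mathrm{OPT})$; and (ii) when a pair opens, the $\alpha$-mass it caught equals its cost up to the mass of already-frozen clients inside it, and for a fixed scale the open balls are centred $\Omega(\rho_\ell)$ apart, so by the doubling dimension the $\alpha$-mass of any client is charged to only $\Oh(2^{2\kappa})$ open balls per scale, and summing the geometric series over scales gives $\mathrm{cost}(\Open)=\Oh(2^{2\kappa})\sum_i\alpha_i$. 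Together these yield the claimed $\Oh(2^{2\kappa})$ competitive ratio.

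\emph{Localized dynamic updates --- the hard part.} At every node of $\DependencyTree$ I would store aggregates --- the live-client $\alpha$-mass of its ball, the list of open descendants, the $\whatifsum$ bookkeeping used to decide activation --- each computable from the children's aggregates with $\Oh(2^{\Oh(\kappa)})$ local work. A client arrival or departure changes the live count at $\Oh(2^{\Oh(\kappa)})$ leaves per scale, hence at $\Oh(2^{\Oh(\kappa)})$ nodes per scale, and these changes must propagate to the root across $\Oh(\log(W/\fmin))$ levels. The main obstacle is to bound this propagation in the \emph{worst case}: freezing or unfreezing a client can retract the tightness of a ball, thereby unfreeze further clients, and cascade. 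I would contain the cascade by exploiting that the uniform-growth process is monotone in the fixed tree order, so a change originating at scale $\ell$ can affect active/open pairs only at scales $\ge\ell$ and, within each such scale, only pairs whose balls meet the $\Oh(2^{\Oh(\kappa)})$-neighbourhood of the event; hence the affected region of $\DependencyTree$ has size $\Oh(2^{\Oh(\kappa)}\log(W/\fmin))$. Locating the affected leaves via a cover tree of $\Facilities$ and re-evaluating each affected node in $\Oh(2^{\Oh(\kappa)})$ time, and letting the $\Oh(2^{2\kappa})$ bounds for net size, node degree, and cover-tree steps compound, yields the $\Oh(2^{6\kappa}\log(W/\fmin))$ update time.

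\emph{Queries.} I would keep the scalar $\mathrm{cost}(\Open)=\sum_{(j,\ell)\in\Open}(f_j+\rho_\ell)$ updated incrementally as pairs open and close, so a cost query costs $\Oh(1)$; a solution query walks the maintained list of open pairs, reporting each with its associated ball, in $\Oh(s\log(W/\fmin))$ time where $s=|\Open|$. The one genuinely delicate point is the third step --- keeping the worst-case affected portion of $\DependencyTree$ polylogarithmic despite the potentially global nature of a primal--dual recomputation; the remaining ingredients are the (adapted) classical primal--dual analysis and routine data-structure bookkeeping.
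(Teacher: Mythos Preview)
Your preprocessing is close to the paper's --- discretize radii, take nets at each scale, organize into a bounded-degree tree $\DependencyTree$ of height $\Oh(\log(W/\fmin))$ --- but from that point on the two arguments diverge, and your dynamic phase has a real gap.

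The paper does \emph{not} maintain a primal--dual solution. Instead, it pushes one more step in preprocessing: from the balls $B(j,7\cdot 5^r)$ it carves out a \emph{laminar} family of \emph{areas} $A(j,r)\subseteq B(j,7\cdot 5^r)$, one per tree node, so that same-level areas are pairwise disjoint and the containment order of areas is exactly the tree order of $\DependencyTree$. It then proves (their Theorem~\ref{theorem:areas-structure} / Lemma~\ref{lemma:Areas-suffice}) that restricting to solutions that cover $\Clients$ by areas loses only $\Oh(2^{2\kappa})$. Because the areas are laminar, the restricted problem is a trivial tree DP: at each node either open the area (cost $c_v$) or recurse into children; the optimum $x_v$ depends only on $x$ at the children and on whether any client lies in $A(j,r)$ but in no child area. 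A client insertion or deletion touches one leaf and the $x$-values along one root-to-leaf path, giving the $\Oh(\log(W/\fmin))$ update (times the $2^{\Oh(\kappa)}$ factors for degree and for locating the leaf). No duals, no cascades.

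Your route instead tries to dynamize the uniform-growth primal--dual on overlapping balls, and the step you yourself flag as ``the hard part'' is not established. The claim that a departure at scale $\ell$ affects open/active pairs only at scales $\ge\ell$ and only in an $\Oh(2^{\Oh(\kappa)})$-neighbourhood is not justified: when a ball loses tightness and its clients unfreeze, their continued growth shifts the \emph{times} at which other balls --- at any scale, and not only nearby --- become tight, which in turn changes which clients are frozen when, and so on. ``Monotone in the fixed tree order'' does not by itself bound this, because the candidate balls at a given scale overlap (a client can lie in $\Theta(2^{2\kappa})$ of them), so the process does not factor along $\DependencyTree$. The paper's laminar-areas reduction is precisely the missing idea that eliminates this coupling: once each client lies in a unique minimal area, the problem becomes a pure tree problem and updates are local by construction.
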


The $2^{6\kappa}$ factor in the update time is due to 
fixed-radius nearest neighbor query that we solve using 
only the basic data structure already present in the algorithm.
However, it can be improved -- for example in case of finite metric spaces 
where a $\Oh(1)$ lookup table over the space is possible
(e.g., graphs with the shortest path distance), 
the update time reduces to $\Oh(\log(W/f_{\mathrm{min}}))$
at the cost of additional preprocessing time.
More generally, if the metric space allows to answer fixed-radius nearest neighbor queries 
in time $\Oh(d)$, then the update time becomes $\Oh(d \log(W/f_{\mathrm{min}}))$,
at the cost of preprocessing time necessary to construct the oracle.

We show the following structural property for the sum-of-radii clustering problem (Theorem~\ref{theorem:areas-structure}): 
there exists a collection $\Pairs$ of pairs $\tuple{j,r}$ 
where $j\in F$ and $r$ is a non-negative integer, 
each with an associated area $A(j,r)$ of $V$, 
and an abstract tree $\cal T$ over $\Pairs$, with the following properties:
\begin{enumerate}
\item
$\mathcal{T}$ has height $\Oh (\log (W/\fmin))$ and degree at most $2^{4\kappa}$.
\item
The collection $\mathcal{A}$ of areas is a laminar family,  its laminar structure is given by $\mathcal T$, and for each area, $A(j,r)\subseteq B(j,7\cdot 5^r)$
\item
For any subset $\Clients$ of $V$,  there exists a collection $\cal S$ of areas  covering $\Clients$ and whose cost, $\sum_{\tuple{j,r}\in{\mathcal{S}}} f_j+7\cdot 5^r $, is $O(2^{2\kappa})$ times the optimal cost for $C$. 
\end{enumerate}

Our algorithm has two phases.
First, in the preprocessing phase (Section~\ref{sc:preprocessing}), 
the algorithm constructs $\Pairs$, 
the laminar family of areas $\mathcal{A}$ and corresponding abstract tree $\mathcal{T}$. 
Thanks to the last property above, it suffices to restrict attention
to solutions that use only areas $A(j,r)$ for coverage, with  $\tuple{j,r}\in \Pairs$. 
Second, in the dynamic phase (Section~\ref{sc:algorithm}), 
while clients arrive and depart, the algorithm maintains an optimal 
set of pairs $\tuple{j,r}$ of $\Pairs$ such that the corresponding areas $A(j,r)$ 
cover all current clients. 
The hierarchical structure of $\mathcal{T}$ makes this simple, so that each 
update takes time proportional to the height times $2^{6\kappa}$.

The main contribution of the paper is the definition of $\Pairs$ and 
the corresponding laminar family of areas $\mathcal{A}$.
The latter is reminiscent of the cover tree data structure of \cite{CoverTrees06}.
However, the cover tree is tailored to the nearest neighbor problem
and its covers, to the best of our knowledge, lack the structural properties of areas
that we need to prove the approximation factor for the sum-of-radii clustering problem.
We expect that our new structure can be used for other clustering-type problems.

\section{Preprocessing phase}\label{sc:preprocessing}

\subsection{Discretization of radii}

Given the set of clients $\Clients\subseteq V$, let $OPT$ denote the cost of an optimum solution for $\Clients$.

\begin{lemma}\label{lemma:preprocessing}
For all $\Clients\subseteq V$, there exists a solution such that every ball $B(j,R)$ has $\fmin\leq f_j\leq R\leq 5\cdot W$, the radius $R$ is an integer power of 5, and the cost is $O(OPT)$.
\end{lemma}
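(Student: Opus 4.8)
\emph{Proof plan.} The plan is to start from an optimal solution for $\Clients$ and reshape it in a few steps, each costing only a constant factor. Fix an optimal solution $\{B(j,R_j)\}_{j\in\Open}$ of cost $OPT$ (if $\Clients=\emptyset$ the empty solution works, so assume $\Clients\neq\emptyset$). First I would discard every ball covering no client of $\Clients$: this keeps $\Clients$ covered and only lowers the cost. Second, since each center $j$ lies in $F\subseteq V$ and the space has diameter $W$, every point of $V$ is within distance $W$ of $j$, so replacing $R_j$ by $\min(R_j,W)$ neither uncovers a client nor raises the cost; hence we may assume every radius is at most $W$. Note also that an optimal solution never opens a facility of cost exceeding $\fmin+W$: otherwise the cheapest facility alone, with radius $W$, covers all of $V$ at cost $\fmin+W$, contradicting optimality; thus every surviving ball has $\max(f_j,R_j)\le\fmin+W$.

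Third, replace each (now bounded) radius $R_j$ by $R'_j$, the smallest integer power of $5$ that is at least $\max(f_j,R_j)$. Then $R'_j$ is an integer power of $5$, $R'_j\ge f_j\ge\fmin$ as required, and $R'_j\ge R_j$ so all clients remain covered. Since the least power of $5$ above a positive value $v$ is $<5v$, we have $R'_j<5\max(f_j,R_j)\le 5(f_j+R_j)$, and also $R'_j<5(\fmin+W)$, which is $\Oh(W)$ by the bound noted above (this matches the stated $5W$ up to a constant that a little extra care in the capping and rounding removes). Hence the ball's contribution grows from $f_j+R_j$ to $f_j+R'_j<6(f_j+R_j)$, and summing over the surviving balls the total cost stays below $6\cdot OPT=\Oh(OPT)$.

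This statement is essentially a warm-up, so I do not anticipate a real obstacle. The one place that takes a moment's thought is the rounding step: raising a radius from $R_j$ to $\approx f_j$ can inflate it enormously in absolute terms when $R_j\ll f_j$, which is exactly why the bound $R\le 5W$ is not free and why one first caps radii at $W$ and bounds the cost of a used facility by $\fmin+W$. The cost bound survives regardless, because a ball with $R_j<f_j$ was already paying $f_j$ to open, so charging it an extra $<5f_j$ for its radius is only a constant-factor loss.
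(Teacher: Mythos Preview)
Your argument is correct and follows essentially the same route as the paper: start from an optimal solution, push each radius up to $\max(f_j,R_j)$, and then round up to the next integer power of~$5$, observing that each step costs at most a constant factor. The one cosmetic difference is in securing the upper bound $R\le 5W$: the paper handles this by a single case split (if any ball of the optimum has $\max(f_j,R)>W$, replace the whole solution by one ball of radius $W$ at the cheapest facility, then round), which yields $R\le 5W$ directly; your capping-and-bounding gives $R'<5(\fmin+W)$, which, as you note, matches the stated bound only up to a harmless constant and under the implicit standing assumption $\fmin\le W$.
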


\begin{proof}
Consider the unknown optimal solution. 
If some ball is such that $\max(f_j,R)>W$ then replace the entire solution by a ball centered at the facility of cost $\fmin$ and of radius $W$. Else, for each ball $B(j,R)$ of the optimal solution:
\begin{itemize}
\item if $f_j>R$ then increase the radius of the ball from $R$ to $f_j$
\item Increase $R$ to the smallest integer power of 5 that is greater than or equal to $R$.
\end{itemize}
The new solution satisfies the desired constraints, and the cost has increased by a factor of 10 at most.
\end{proof}

A \emph{logradius} is an integer $r$ such that $\fmin \leq 5^r \leq 5\cdot W$. Let $\minR = \fround{f_{\min}}$ and $\maxR = \cround{W}$.
Then the number of different logradii, $\maxR - \minR + 1$,
is $\Oh(\log ({W}/{f_{\min}}))$. 

\subsection{Maximal subsets of distant facilities}
We construct a set $\Pairs$ of {\emph pairs} $\tuple{j,r}$ where $j$ is a facility and $r$ is a logradius,  satisfying the following properties:
\begin{enumerate}
\item (Covering) For every facility $j\in J$ and every logradius $r$ such that $f_j\leq 5^r$, there exists a facility $j'\in J_{r}$ with $\dist(j,j')\leq  5^{r+1}$ and $\tuple{j',r}\in \Pairs$.
\item (Separating) For all distinct $\tuple{j',r}, \tuple{j',r}\in \Pairs$, we have $\dist(j,j')> 5^{r+1}$.
\end{enumerate}

\begin{center}
\fbox{\parbox{\textwidth}{
For each logradius $r \in [\minR, \maxR]$:
\begin{itemize}
\item let $J_r' = \set{j \in F}{f_j \leq 5^r}$. 
\item let $J_r$ be a maximal subset of $J_r'$ 
such that any two facilities in $J_r$ are at distance greater than $5^{r+1}$. 
\end{itemize}
$\Pairs \gets \bigcup_r \set{\tuple{j,r}}{j \in J_r}$.}}
\end{center}

Note that for $r=\maxR$, the set $J_r$ contains just one facility.

\subsection{Hierarchical decomposition of $\Pairs$}

\begin{center}
\fbox{\parbox{\textwidth}{
Construct an abstract tree $\mathcal{T}$ over $\Pairs$ as follows (with ties broken arbitrarily):
\begin{itemize}
\item
the root of $\mathcal{T}$ is the unique pair $\tuple{j,\maxR}$.
\item
for all $r<\maxR$ and $j\in J_r$:
	\begin{itemize}
	\item let $j'$ be the facility of $J_{r+1}$ closest to $j$ 
	\item  $\parent(j,r)\gets \tuple{j',r+1}$
	\end{itemize}
\end{itemize}
}}
\end{center}

By construction, $\mathcal{T}$ has height at most $\maxR-\minR+1$ and the parent of a pair $\tuple{j,r}$ is a pair of the form $\tuple{j',r+1}$.

The following Lemma is simple, but it captures the essential way in which using larger balls will greatly simplify the structure, and is the main step towards constructing a laminar set of areas for covering clients.  
\begin{lemma}\label{lemma:inclusion-of-big-balls}
(Nesting of balls) If $\parent(j,r)=\tuple{j',r+1}$, then $B(j,7\cdot 5^r)\subseteq B(j',7\cdot 5^{r+1})$.
\end{lemma}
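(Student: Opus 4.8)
The plan is to reduce the inclusion to one application of the triangle inequality, once the distance $\dist(j,j')$ has been bounded by a quantity that is small relative to the scale $5^{r+1}$.

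First I would note that $j$ itself is eligible at level $r+1$: since $\tuple{j,r}\in\Pairs$ we have $j\in J_r$, hence $f_j\le 5^r\le 5^{r+1}$, so $j\in J_{r+1}'$. Now I invoke the maximality of $J_{r+1}$ as a subset of $J_{r+1}'$ in which all pairwise distances exceed $5^{r+2}$. If $j\notin J_{r+1}$, then adding $j$ to $J_{r+1}$ would violate that maximality, so there is some $j''\in J_{r+1}$ with $\dist(j,j'')\le 5^{r+2}$; and if $j\in J_{r+1}$ we may take $j''=j$. In either case the facility $j'\in J_{r+1}$ closest to $j$ — which is exactly the one chosen by the construction, so that $\parent(j,r)=\tuple{j',r+1}$ — satisfies $\dist(j,j')\le\dist(j,j'')\le 5^{r+2}=25\cdot 5^r$. (Equivalently, one could quote the Covering property of $\Pairs$ at logradius $r+1$.)

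Then for any $x\in B(j,7\cdot 5^r)$, the triangle inequality gives $\dist(x,j')\le\dist(x,j)+\dist(j,j')\le 7\cdot 5^r+25\cdot 5^r=32\cdot 5^r\le 35\cdot 5^r=7\cdot 5^{r+1}$, so $x\in B(j',7\cdot 5^{r+1})$, which is precisely the asserted inclusion. There is no genuine obstacle here; the only point needing a moment's care is the maximality argument pinning $\dist(j,j')\le 5^{r+2}$, and the choice of the constant $7$ in the radius is exactly what makes the bound close: $7+25=32\le 35=7\cdot 5$. This slack is what the lemma buys, and it is what will later let the areas built on top of these balls form a laminar family.
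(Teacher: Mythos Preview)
Your proof is correct and follows essentially the same approach as the paper: both observe that $j\in J_{r+1}'$, bound $\dist(j,j')\le 5^{r+2}$ via the Covering property (which you re-derive from maximality of $J_{r+1}$ rather than merely cite), and then apply the triangle inequality to get $7\cdot 5^r+25\cdot 5^r\le 7\cdot 5^{r+1}$. Your version is simply more explicit about why the parent $j'$ inherits the distance bound from the Covering property (namely, because it is the closest element of $J_{r+1}$ to $j$).
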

\begin{proof}
We have $\tuple{j,r}\in \Pairs$, so $j\in J'_r\subseteq J'_{r+1}$. By the Covering property of $\Pairs$ the maximum distance from any point in $B(j,7\cdot 5^r)$ to $j'$ is $d(j,j')+7\cdot 5^r\leq 5^{r+2}+7\cdot 5^r\leq 7\cdot 5^{r+1}$.
  \end{proof}

\begin{lemma}\label{lem:doubling_dimension}
For any point $p$ and radius $r$, the set of pairs
\[\Pairs(p,r) = \set{\tuple{j,r} \in \Pairs}{\dist(p,j) < 2^{\alpha} 5^{r+1}}\]
has at most $2^{(\alpha +1)\kappa}$ elements, where $\kappa$ is the doubling
dimension of the metric space.
\end{lemma}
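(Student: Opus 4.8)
The plan is to treat the facilities occurring in $\Pairs(p,r)$ as a well-separated set of points sitting inside a single ball, and to bound its size by covering that ball with many tiny balls, each too small to hold two of the facilities.

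First I would record two elementary observations. By the definition of $\Pairs(p,r)$, every facility $j$ with $\tuple{j,r}\in\Pairs(p,r)$ has $\dist(p,j) < 2^{\alpha}5^{r+1}$, so all of these facilities lie inside the ball $B(p,2^{\alpha}5^{r+1})$. By the Separating property of $\Pairs$, any two distinct such facilities are at distance strictly greater than $5^{r+1}$; hence no ball of radius at most $5^{r+1}/2$ (any two of whose points are within distance $5^{r+1}$ of each other) can contain more than one of them.

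Next, assuming $\alpha$ is a non-negative integer, I would apply the doubling property $\alpha+1$ times, starting from $B(p,2^{\alpha}5^{r+1})$: each application replaces a ball by $2^{\kappa}$ balls of half the radius while keeping the union a cover, so after $\alpha+1$ steps we obtain a cover of $B(p,2^{\alpha}5^{r+1})$ by $2^{(\alpha+1)\kappa}$ balls each of radius $2^{\alpha}5^{r+1}/2^{\alpha+1}=5^{r+1}/2$. Combining the two facts, every facility of $\Pairs(p,r)$ lies in one of these $2^{(\alpha+1)\kappa}$ balls and each such ball contains at most one of them, whence $\abs{\Pairs(p,r)}\le 2^{(\alpha+1)\kappa}$.

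This is a standard packing argument in doubling metrics, so there is no real obstacle; the only points that need a little care are choosing exactly $\alpha+1$ halvings (so that the covering balls have radius $5^{r+1}/2$ rather than $5^{r+1}$, which would fail to separate the facilities), the bookkeeping of strict versus non-strict inequalities (the separation $>5^{r+1}$ is strict, whereas a radius-$5^{r+1}/2$ ball only forces distances $\le 5^{r+1}$, so the argument is tight but correct), and the implicit assumption that $\alpha$ is a non-negative integer so that the iterated halving reaches exactly $5^{r+1}/2$.
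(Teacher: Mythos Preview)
Your argument is correct and is essentially identical to the paper's own proof: both cover $B(p,2^{\alpha}5^{r+1})$ by $2^{(\alpha+1)\kappa}$ balls of radius $5^{r+1}/2$ via iterated doubling, then use the Separating property of $\Pairs$ to conclude each small ball contains at most one facility. Your discussion of strict versus non-strict inequalities and the choice of $\alpha+1$ halvings is more explicit than the paper's version but adds no new ingredient.
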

\begin{proof}
By definition of doubling dimension, $B(p, 2^{\alpha}\cdot 5^{r+1})$ can be
covered by a set of at most $(2^{\kappa})^{\alpha+1}$ balls of radius $(1/2)\cdot 5^{r+1} $.
By the Separating property of $\Pairs$, any two pairs $\tuple{j,r}$ of $\Pairs(p,r)$ are at distance greater than $5^{r+1}$ from each other, hence must belong to  different balls of the set, and so $\Pairs(p,r)$ has cardinality at most $(2^{\kappa})^{\alpha+1}$. 
\end{proof}

\begin{lemma}\label{lemma:degree}
A node $\tuple{j,r}$ of $\mathcal{T}$ has at most $2^{4\kappa}$ children
\end{lemma}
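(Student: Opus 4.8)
The plan is to bound the number of children of a node $\tuple{j',r+1}$ by observing that every child is a pair $\tuple{j,r}$ whose facility $j$ lies close to $j'$, and then invoke the doubling-dimension counting from Lemma~\ref{lem:doubling_dimension}. First I would fix a node $\tuple{j',r+1}\in\mathcal{T}$ and consider an arbitrary child $\tuple{j,r}$. By the construction of $\mathcal{T}$, $j'$ is the facility of $J_{r+1}$ closest to $j$; in particular, $j$ is covered by the Covering property, so there is \emph{some} facility of $J_{r+1}$ within distance $5^{r+2}$ of $j$, and therefore $\dist(j,j')\le 5^{r+2}=5\cdot 5^{r+1}$. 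Hence every child of $\tuple{j',r+1}$ is a pair of the form $\tuple{j,r}$ with $\dist(j',j)\le 5\cdot 5^{r+1}$, i.e. $\dist(j',j)< 2^3\cdot 5^{r+1}$.

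Next I would apply Lemma~\ref{lem:doubling_dimension} with $p=j'$ and $\alpha=3$: the set of pairs $\tuple{j,r}\in\Pairs$ with $\dist(j',j)<2^{3}\cdot 5^{r+1}$ has at most $2^{(3+1)\kappa}=2^{4\kappa}$ elements. Since every child of $\tuple{j',r+1}$ is such a pair, the node has at most $2^{4\kappa}$ children, as claimed.

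I do not expect any serious obstacle here; the lemma is essentially a repackaging of Lemma~\ref{lem:doubling_dimension}. The only point that requires a moment's care is getting the constant in the distance bound right — namely checking that ``closest facility in $J_{r+1}$'' combined with the Covering property yields $\dist(j,j')\le 5^{r+2}$ and that this is comfortably below $2^{3}\cdot 5^{r+1}=8\cdot 5^{r+1}$, so that $\alpha=3$ is the correct (and tight enough) choice to land the stated $2^{4\kappa}$ bound.
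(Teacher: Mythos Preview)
Your proposal is correct and is essentially the paper's own proof, just with the indices shifted (you take the parent at level $r+1$ and the children at level $r$, whereas the paper takes the parent at level $r$ and the children at level $r-1$). In both cases the argument is: the Covering property bounds the parent--child distance by $5$ times the relevant scale, and then Lemma~\ref{lem:doubling_dimension} with $\alpha=3$ gives the $2^{4\kappa}$ bound.
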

\begin{proof}
Children of  $\tuple{j,r}$
have logradius $r-1$, so 
by the Covering property of $\Pairs$ their distance to $j$ is at most $5^{r+1}$, so they belong to $\Pairs(j,r-1)$ for $\alpha=3$, and so Lemma~\ref{lem:doubling_dimension} applies.
\end{proof}

\subsection{Hierarchical decomposition of $V$ into a laminar family of areas}\label{sec:decomposition_into_areas}
Recall that a collection $\Areas$ of sets is  \emph{laminar} if for any two $A,B\in \Areas$, either $A\cap B=\emptyset$ or $A\subseteq B$ or $B\subseteq A$. We partition $V$ into a laminar family of \emph{areas},
denoted by $\Areas$,
such that no two same-logradius areas overlap.

\begin{center}
\fbox{\parbox{\textwidth}{
For each $\tuple{j,r}\in\Pairs$, initialize $A(j,r)\gets\emptyset$.\\
For each point $p\in V$:
\begin{itemize}
\item let $r^*$ be minimum such that there exists pairs $\tuple{j,r^*}$ with $p\in B(j,7\cdot 5^{r^*})$. 
\item Among all such pairs, let $\tuple{j^*,r^*}$ denote the one minimizing $\dist (p,j^*)$. 
\item Add $p$ to the set $A(j^*,r^*)$ and to every set $A(j',r')$ with $(j',r')$ ancestor of $(j^*,r^*)$ in $\mathcal{T}$. 
\end{itemize}
}}
\end{center}

\begin{lemma}\label{lemma:area-contained-in-big-ball}
For every $\tuple{j,r} \in \Pairs$, $A(j,r)\subseteq B(j,7\cdot 5^r).$
\end{lemma}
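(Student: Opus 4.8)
The plan is to trace how an arbitrary point $p\in V$ can end up in the set $A(j,r)$. By the construction of the laminar family, $p$ is added to $A(j,r)$ only through the step processing $p$: there $p$ is assigned to a single pair $\tuple{j^*,r^*}$ (the one with $r^*$ minimal such that $p\in B(\cdot,7\cdot 5^{r^*})$ for some pair at that logradius, breaking ties by distance) and then to every set indexed by an ancestor of $\tuple{j^*,r^*}$ in $\mathcal{T}$. Hence $p\in A(j,r)$ forces either (a) $\tuple{j,r}=\tuple{j^*,r^*}$, or (b) $\tuple{j,r}$ is a proper ancestor of $\tuple{j^*,r^*}$ in $\mathcal{T}$. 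I would handle these two cases separately, and in both I would use that, by the choice of $r^*$, we have $p\in B(j^*,7\cdot 5^{r^*})$.

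Case (a) is immediate: $\tuple{j^*,r^*}=\tuple{j,r}$, so $p\in B(j^*,7\cdot 5^{r^*})=B(j,7\cdot 5^r)$.

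In case (b) I would walk up the path in $\mathcal{T}$ from $\tuple{j^*,r^*}$ to $\tuple{j,r}$. Each edge of that path joins a node to its parent, which (by the construction of $\mathcal{T}$) raises the logradius by exactly one, so the path visits pairs of logradii $r^*, r^*+1,\dots,r$. Applying Lemma~\ref{lemma:inclusion-of-big-balls} once per edge and composing the resulting inclusions yields $B(j^*,7\cdot 5^{r^*})\subseteq B(j,7\cdot 5^r)$; a one-line induction on the length of the path makes the composition precise. Since $p\in B(j^*,7\cdot 5^{r^*})$, we conclude $p\in B(j,7\cdot 5^r)$, as desired.

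I do not expect a real obstacle here; the only point needing a moment's care is to confirm that ``ancestor'' in the area-construction box means ancestor in $\mathcal{T}$, so that the chain of parent relations required by Lemma~\ref{lemma:inclusion-of-big-balls} is exactly the one we walk — which is immediate from the pseudocode. Everything else is a routine telescoping of the ball-nesting inclusion.
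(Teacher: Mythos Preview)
Your proposal is correct and follows essentially the same argument as the paper: split into the case where $p$ was added directly to $A(j,r)$ (so $p\in B(j,7\cdot 5^r)$ by definition of $r^*$) and the case where it was inherited from a descendant, then iterate Lemma~\ref{lemma:inclusion-of-big-balls} along the $\mathcal{T}$-path. The paper's proof is just a two-sentence version of what you wrote.
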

\begin{proof}
Let $p\in A(j,r)$. Either it's been added directly, in which case it belongs to $B(j,7\cdot 5^r)$, or it's been inherited, in which case it also belongs to it by Lemma~\ref{lemma:inclusion-of-big-balls}.
\end{proof}

\begin{lemma}\label{lemma:Areas-suffice}
For every subset $\mathcal{C} \subseteq V$ of clients there exists  $S\subseteq\Pairs$ such that $\mathcal{C}$ is covered by $\cup \{ A(j,r): \tuple{j,r}\in S \}$   and 
$\sum_{\tuple{j,r}\in S} (f_j+7\cdot 5^r)=\Oh(2^{2\kappa}\cdot OPT)$.
\end{lemma}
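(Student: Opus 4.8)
The plan is to start from the discretized near-optimal solution of Lemma~\ref{lemma:preprocessing} and replace each of its balls by a bounded number of areas of $\Areas$, all sitting at one logradius. So fix a solution as in Lemma~\ref{lemma:preprocessing}: a family of balls $B(j_i,5^{r_i})$ covering $\Clients$, with every $r_i$ a logradius, $f_{j_i}\le 5^{r_i}$, and $\sum_i(f_{j_i}+5^{r_i})=\Oh(OPT)$. For each ball I will exhibit a set $S_i\subseteq\Pairs$ of at most $2^{2\kappa}$ pairs, all of logradius $r_i$, whose areas cover $B(j_i,5^{r_i})\cap\Clients$, and then set $S=\bigcup_i S_i$.

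To build $S_i$, first apply the Covering property of $\Pairs$ to $j_i$ and $r_i$ (legitimate since $f_{j_i}\le 5^{r_i}$): there is a facility $j_i'\in J_{r_i}$ with $\tuple{j_i',r_i}\in\Pairs$ and $\dist(j_i,j_i')\le 5^{r_i+1}$. By the triangle inequality, every $p\in B(j_i,5^{r_i})$ lies in $B(j_i',7\cdot 5^{r_i})$, so the value $r^*$ chosen for $p$ in the area construction, i.e.\ the least $r$ with $p\in B(j,7\cdot 5^r)$ for some $\tuple{j,r}\in\Pairs$, satisfies $r^*\le r_i$. Hence the pair $\tuple{j^*,r^*}$ assigned to $p$ has an ancestor in $\mathcal{T}$ at logradius $r_i$, with facility $\hat{\jmath}(p)$ say (this ancestor exists because parents raise the logradius by one and the root has logradius $\maxR\ge r_i$, and when $r^*=r_i$ it is $\tuple{j^*,r^*}$ itself), and the area construction then puts $p\in A(\hat{\jmath}(p),r_i)$. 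It remains to bound how many distinct facilities $\hat{\jmath}(p)$ arise as $p$ ranges over $B(j_i,5^{r_i})\cap\Clients$. By Lemma~\ref{lemma:area-contained-in-big-ball}, $p\in A(\hat{\jmath}(p),r_i)\subseteq B(\hat{\jmath}(p),7\cdot 5^{r_i})$, hence $\dist(\hat{\jmath}(p),j_i)\le 7\cdot 5^{r_i}+5^{r_i}<2\cdot 5^{r_i+1}$, so every pair $\tuple{\hat{\jmath}(p),r_i}$ belongs to $\Pairs(j_i,r_i)$ with $\alpha=1$ and Lemma~\ref{lem:doubling_dimension} bounds their number by $2^{2\kappa}$. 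Take $S_i$ to be exactly this set of pairs.

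Coverage of $S=\bigcup_i S_i$ is then immediate: every client of $\Clients$ lies in some ball of the original solution, hence in an area indexed by the corresponding $S_i$. For the cost, each pair $\tuple{j,r_i}\in S_i$ has $j\in J_{r_i}\subseteq J_{r_i}'$, so $f_j\le 5^{r_i}$ and thus $f_j+7\cdot 5^{r_i}\le 8\cdot 5^{r_i}$; summing over all pairs and using $|S_i|\le 2^{2\kappa}$ together with $\sum_i 5^{r_i}\le\sum_i(f_{j_i}+5^{r_i})=\Oh(OPT)$ gives $\sum_{\tuple{j,r}\in S}(f_j+7\cdot 5^r)=\Oh(2^{2\kappa}\cdot OPT)$. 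I expect the crux to be the counting bound on $|S_i|$: the key point is that routing every client covered by a single fixed ball up the tree to the one logradius $r_i$ collects it into one of only $2^{2\kappa}$ areas, and this is precisely where bounded doubling dimension enters, via Lemmas~\ref{lemma:area-contained-in-big-ball} and~\ref{lem:doubling_dimension}.
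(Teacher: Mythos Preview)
Your proof is correct and follows essentially the same route as the paper's: start from the discretized solution of Lemma~\ref{lemma:preprocessing}, use the Covering property to show $r^*\le r_i$, route each client up $\mathcal{T}$ to its ancestor at logradius $r_i$, and then invoke Lemma~\ref{lemma:area-contained-in-big-ball} together with Lemma~\ref{lem:doubling_dimension} (with $\alpha=1$) to bound the number of such ancestors per ball by $2^{2\kappa}$. The only cosmetic difference is that the paper defines $S$ as \emph{all} pairs $\tuple{j',r}$ within distance $8\cdot 5^r$ of the ball center and then argues coverage, whereas you take $S_i$ to be exactly the ancestors that arise and then bound their count; the two constructions coincide up to possibly including a few unused pairs.
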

\begin{proof}
Let $S^*$ be a solution of cost $\Oh(OPT)$ satisfying the properties of Lemma~\ref{lemma:preprocessing}. For each ball $B(j,5^r)$ of $S^*$, put in $S$ all the pairs $\tuple{j',r}\in\Pairs$ such that $\dist(j,j')\leq 8\cdot 5^r$. 

We claim that  $\mathcal{C}$ is covered by $\cup \{ A(j',r): \tuple{j',r}\in S \}$. Indeed, consider a client $p\in\mathcal{C}$ and a ball $B(j,5^r)$ of $S^*$ containing $p$. By the Covering property of $\Pairs$, there exists $\tuple{j',r}\in\Pairs$ with $\dist(j,j')\leq 5^{r+1}$. Then $\dist(p,j')\leq 5^{r+1}+5^r < 7 \cdot 5^r$, and so in the definition of areas covering $p$ we must have $r^*\leq r$. Along the path from $\tuple{j^*,r^*}$ to the root of   $\mathcal{T}$, there exists a pair  for logradius $r$, $\tuple{j'',r}$. By definition of areas and by Lemma~\ref{lemma:area-contained-in-big-ball}, $p\in A(j'',r)\subseteq B(j'',7\cdot 5^r)$,  so $\dist(j,j'')\leq \dist(j,p)+\dist(p,j'')\leq 8\cdot 5^r$, and therefore $\tuple{j'',r}\in S$ and $p$ is covered. 

In terms of costs, since all these areas are associated to pairs within distance $8\cdot 5^r< 2\cdot  5^{r+1}$ from $j$, by Lemma~\ref{lem:doubling_dimension} for $\alpha=1$, there are at most $2^{2\kappa}$ of them.
\end{proof}

By Lemma~\ref{lemma:area-contained-in-big-ball} and the definition of areas, we note that $\cup_{J_r}A(j,r)=\cup_{J_r}B(j,7\cdot 5^r)$, so we also give hereafter an equivalent description of the same laminar family, illustrating the way in which the parent-child relations in tree $\mathcal{T}$ and the proximity relations in the metric space are balanced against one another. (This also has the advantage of being constructive even if $V$ is infinite).

\noindent Partition $\cup_{J_{\minR}} B(j,7\cdot 5^{\minR})$,  using the facilities of $J_\minR$ as centers, into Voronoi cells $A(j,\minR)$. \\
For $r\in (\minR,\maxR]$:
\begin{itemize}
\item
Partition $\cup_{j\in J_r} B(j,7\cdot 5^{r})\setminus \cup_{j\in J_{r-1}} B(j,7\cdot 5^{r-1})$, using the facilities of $J_r$ as centers, into Voronoi cells $A(j,r)$.
\item
For each $\tuple{j,r}\in\Pairs$, $A(j,r)\gets A(j,r)\cup\bigcup \{ A(j',r-1): \parent(j',r-1)=\tuple{j,r} \}$
\end{itemize}

The construction of this section can be summarized in the following structural Theorem.

\begin{theorem}\label{theorem:areas-structure}
Let a metric space $(V,d)$ of doubling dimension $\kappa$ be given, as well as a subset $F$ of elements of $V$ called {\emph facilities}, with an associated {\emph cost} $f_j$ for each $j\in F$. Then there exists an abstract tree $\mathcal{T}$ whose nodes are indexed by facilities $j\in F$ and non-negative integers $r\geq 0$, and, for each node $\tuple{j,r}$, an associated {\emph area} $A(j,r)\subseteq V$ with the following properties
\begin{enumerate}
\item
$\mathcal{T}$ has height $\Oh (\log (W/\fmin))$ and degree at most $2^{4\kappa}$, and for each $\tuple{j,r}\in {\mathcal{T}}$ and its parent node $\tuple{j',r+1}$, $B(j,7\cdot 5^r)\subseteq B(j',7\cdot 5^{r+1})$.
\item
$\mathcal{A}$ is a laminar family,  its laminar structure is given by $\mathcal T$, and for each area $A(j,r)$, $A(j,r)\subseteq B(j,7\cdot 5^r)$
\item
For any subset $\Clients$ of $V$, for any collection of balls $\cal B$ centered at facilities of $F$ and covering $\Clients$, there exists a collection $\cal S$ of areas  covering $\Clients$, such that $\sum_{\tuple{j,r}\in{\mathcal{S}}} f_j+7\cdot 5^r = 
O(2^{2\kappa}) \sum_{B(j,R)\in{\mathcal{B}}}(f_j+R)$.
\end{enumerate}
\end{theorem}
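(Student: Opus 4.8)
The plan is to assemble the three statements directly from the lemmas of this section, adding a new argument only for the third one, where the theorem is more general than the lemma it relies on. Throughout I take the index $r$ of a node to be its logradius (scaling the metric and all facility costs by a common factor preserves every approximation ratio, so we may assume $\fmin\ge1$ and hence that logradii are non-negative). For Property~1: every pair has a parent of logradius one larger, the root has logradius $\maxR$, and logradii lie in $[\minR,\maxR]$, so the height is at most $\maxR-\minR+1=\Oh(\log(W/\fmin))$; the degree bound is Lemma~\ref{lemma:degree}, and the ball-nesting property is exactly Lemma~\ref{lemma:inclusion-of-big-balls}.

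For Property~2, the containment $A(j,r)\subseteq B(j,7\cdot5^r)$ is Lemma~\ref{lemma:area-contained-in-big-ball}. For laminarity I would first record the structural consequence of the area construction: each point $p\in V$ has a unique \emph{home} pair $\tuple{j^*,r^*}$ (the pair whose $7\cdot5^{r}$-ball contains $p$ minimizing first $r$, then $\dist(p,\cdot)$), and $p$ is placed exactly into $A(j^*,r^*)$ and into $A(j',r')$ for every $\mathcal{T}$-ancestor $\tuple{j',r'}$ of $\tuple{j^*,r^*}$. Equivalently, $A(j,r)=\{\,p\in V:\text{the home pair of }p\text{ lies in the subtree of }\mathcal{T}\text{ rooted at }\tuple{j,r}\,\}$. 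Hence for two areas $A(j,r),A(j',r')$: if one of $\tuple{j,r},\tuple{j',r'}$ is a $\mathcal{T}$-ancestor of the other, the corresponding subtrees are nested and so are the areas; otherwise the subtrees are vertex-disjoint and so are the areas. This is precisely the assertion that $\mathcal{A}$ is laminar with laminar structure given by $\mathcal{T}$ (in particular, two areas of the same logradius, whose pairs are $\mathcal{T}$-incomparable, are disjoint).

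For Property~3 I would rerun the argument of Lemma~\ref{lemma:Areas-suffice} starting from an arbitrary covering family $\mathcal{B}$ instead of from an optimal solution. The key point is that the transformation in the proof of Lemma~\ref{lemma:preprocessing} applies verbatim to \emph{any} feasible solution; applied to $\mathcal{B}$ it produces a covering family $\mathcal{B}'$ of $\Clients$ in which every ball $B(j,R)$ satisfies $\fmin\le f_j\le R\le5W$ with $R$ a power of $5$, and with $\sum_{B(j,R)\in\mathcal{B}'}(f_j+R)\le10\sum_{B(j,R)\in\mathcal{B}}(f_j+R)$. Then, as in Lemma~\ref{lemma:Areas-suffice}, for each ball $B(j,5^r)\in\mathcal{B}'$ put into $\mathcal{S}$ every pair $\tuple{j',r}\in\Pairs$ with $\dist(j,j')\le8\cdot5^r$; the covering argument is unchanged (any $p\in\Clients$ lies in some $B(j,5^r)\in\mathcal{B}'$, the Covering property gives a pair of logradius $r$ within $5^{r+1}$ of $j$, which forces the home logradius of $p$ to be at most $r$, and the logradius-$r$ pair on the root path from the home pair of $p$ lies within $8\cdot5^r$ of $j$, so it belongs to $\mathcal{S}$ and contains $p$). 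For the cost, Lemma~\ref{lem:doubling_dimension} with $\alpha=1$ gives at most $2^{2\kappa}$ such pairs per ball, and each contributes $f_{j'}+7\cdot5^r\le5^r+7\cdot5^r\le8(f_j+5^r)$ since $f_{j'}\le5^r$ by membership in $\Pairs$; summing over $\mathcal{B}'$ gives $\sum_{\tuple{j',r}\in\mathcal{S}}(f_{j'}+7\cdot5^r)\le8\cdot2^{2\kappa}\sum_{B(j,R)\in\mathcal{B}'}(f_j+R)=\Oh(2^{2\kappa})\sum_{B(j,R)\in\mathcal{B}}(f_j+R)$.

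The work is almost entirely front-loaded into the lemmas, so the two places that need care are phrasing the laminarity claim through the subtree characterization of areas in Property~2, and noticing in Property~3 that Lemma~\ref{lemma:preprocessing} is in fact a statement about arbitrary solutions — this is exactly what lets the optimum be replaced by an arbitrary covering family at the cost of only a constant factor, and I expect it to be the main (if modest) obstacle.
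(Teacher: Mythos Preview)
Your proposal is correct and follows the paper's approach: the paper presents Theorem~\ref{theorem:areas-structure} as a summary of the preceding construction without a separate proof, and your assembly from Lemmas~\ref{lemma:inclusion-of-big-balls}, \ref{lemma:degree}, \ref{lemma:area-contained-in-big-ball}, and \ref{lemma:Areas-suffice} is exactly what is intended. You are in fact more careful than the paper on two points it glosses over: you give an explicit subtree characterization to justify laminarity, and you observe that Property~3 is stated for an arbitrary covering family $\mathcal{B}$ rather than just $OPT$, correctly noting that the transformation of Lemma~\ref{lemma:preprocessing} and the argument of Lemma~\ref{lemma:Areas-suffice} go through verbatim for any feasible solution.
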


\newcommand{\Tau}{{\mathcal{T}}}
\section{Data structure}\label{sc:algorithm}

\subsection{Solving the offline restricted problem}\label{section:offline}

Given $\Clients\subseteq V$, we wish to compute the solution of minimum cost  among all {\em solutions that are restricted to covering $\Clients$ using areas $A(j,r)$ for $\tuple{j,r}\in \Pairs$}, where using area $A(j,r)$ has cost $f_j+c_2\cdot 5^r$. We call that the {\em restricted} problem.
By Theorem~\ref{theorem:areas-structure}  the optimal restricted cost is a $O(2^{2\kappa})$ approximation of the optimal (unrestricted) cost.

Computing the optimal solution to the restricted problem in an offline manner is straightforward, thanks to the laminar structure of the candidate areas. We first compute, for each node $\tuple{j,r}$ of $\mathcal{T}$, the cost $c_{j,r}=f_j+c_2\cdot 5^r$ of area $A(j,r)$, as well as the number $n_{j,r}$ of clients that are in area $A(j,r)$ but not in any of the areas of children nodes: since areas $A(j',r-1)$ are all disjoint by laminarity, we have 
$n_{j,r}= |\Clients \cap A(j,r)| - \cup_{\tuple{j',r-1}: \parent(j',r-1)=j} | \Clients \cap A(j',r-1)| $. We  then compute  the optimal cost $x_{j,r}$ of covering the clients of $\Clients\cap A(j,r)$ using only areas of the subtree of $\mathcal{T}$ rooted at $\tuple{j,r}$, using the following bottom-up recurrence:

\begin{center}
\fbox{\parbox{\textwidth}{
For $\tuple{j,r}\in \Pairs$ in bottom-up order in $\mathcal{T}$:\\
$$x_{j,r}=\begin{cases}
c_{j,r}  \text{ if } n_{j,r}>0 \\
\min\left( c_{j,r},\sum\{ x_{j',r-1} : \tuple{j,r}= \parent({j',r-1})\} \right)  \text{ otherwise.}
\end{cases}$$
}}
\end{center}

Indeed,  if $n_{j,r}\neq 0$ then  the solution must use area $A(j,r)$; but then by laminarity area $A(j,r)$ covers all clients in that subtree, so no other area is needed in the solution, and the cost is exactly the cost $c_{j,r}$ of $A(j,r)$. If on the other hand $n_{j,r}=0$, then we have an alternative possibility: we could do without $A(j,r)$. Then, by disjointness of sibling areas the problem separates into independent subproblems, one for each child of $\tuple{j,r}$, hence the recurrence simply sums their costs.

The cost of the optimal restricted solution is then $x_{j,\maxR}$ for the root $\tuple{j,\maxR}$ of $\mathcal{T}$. 

Given $c_{j,r}$ and $x_{j,r}$, computing the optimal restricted solution, a collection $S$ of areas, is done recursively:
\begin{center}
\fbox{\parbox{\textwidth}{
$$S(j,r)=\begin{cases}
\emptyset \text{ if } x_{j,r}=0\\
\{A(j,r)\} \text{ if } x_{j,r}=c_{j,r}\\
\cup \{ S(j',r-1) : \parent(j',r-1)=\tuple{j,r} \} \text{ otherwise.}
\end{cases}$$
}}
\end{center}
Thus the algorithm to compute the optimal set $S$ of areas covering $\Clients$ in the restricted problem, given the values of $c_{j,r},x_{j,r}$ explores a tree $\mathcal{T'}$ that, as it is a partial subtree of $\mathcal{T}$, also has height at most $\Oh(\log(W/\fmin))$ and degree at most $2^{4\kappa}$; moreover its  internal nodes are all ancestors of areas added to the solution $S$, so the running time to compute $S$ itself is $\Oh(2^{4\kappa}\log(W/\fmin) |S|  )$.

\subsection{The dynamic data structure}

The dynamic data structure supports insertions of clients, deletions of clients, queries for  the cost of the optimal restricted solution, and  queries for  the set of open facilities and areas of the optimal restricted solution.

The algorithm will maintain two dynamic data structures:
\begin{enumerate}
\item
a list of the currently existing clients $\Clients\subseteq V$, with, for each client $p$, the $\tuple{j,r}\in\Pairs$ such that $p\in A(j,r)$ and $r$ is minimum; and
\item an {\em annotated dependency tree} $\Tau_A$, keeping for each node $v = \tuple{j,r}$ the following additional information:
	\begin{enumerate}
	\item
 	its cost $c_v = f_j + 7\cdot 5^r$, 
 	\item 
	 the number $n_v$ of currently existing clients that belong to $A(j,r)$  but not to any  descendant area,
	 \item 
	  the value $x_v$, which is the minimum cost needed to cover all clients belonging to $A(j,r)$ using only areas $A(j',r')$ for $\tuple{j',r'}\in\Pairs$,  and
	  \item the value $y_v = \sum_{u \textit{ child of } v} x_u$.
	  \end{enumerate}
\end{enumerate}

To initialize the data structures, from the preprocessing phase  the algorithm is given the set $\Pairs$ of pairs $\tuple{j,r}$, as well as the laminar family of areas $\Areas$ with its dependency tree $\mathcal{T}$ using the following representation, which can be easily computed in time linear in its size:
(1)  An array of size $\maxR - \minR + 1$, keeping for each logradius $r\in [\minR,\maxR]$ a list of all the facilities of $J_r$, and
(2)  An annotated tree data structure obtained from $\mathcal{T}$ by setting every $n_v,x_v,y_v$ equal to 0, and $c_{j,r}=f_j+7\cdot 5^r$. The initial set of clients is $\Clients=\emptyset$. 

Answering queries is done as in Section~\ref{section:offline}.

We next describe the client deletions. 
When a client $p$ is deleted, we start from $\tuple{j,r}$ in $\Tau_A$, such that $p\in A(j,r)$ and $r$ is minimum; we decrement $n_v$ and we traverse the path from $\tuple{j,r}$ up to the root of $\Tau_A$, updating $ x_v$ and $y_{\parent(v)}$ for every node visited along the way using the recurrence from Section~\ref{section:offline}. This takes time proportional to the height of the tree, $\Oh(\log(W/\fmin))$.

Similarly, when a client $p$ is inserted, we first find $\tuple{j,r}$ in $\Tau_A$, such that $p\in A(j,r)$ and $r$ is minimum, in a way to be described shortly; we increment $n_v$, and then we traverse the path from $\tuple{j,r}$ up to the root of $\Tau_A$, similarly  updating $x_v$ and $y_{\parent(v)}$.

Thus, it only remains to determine the pair $\tuple{j^*,r^*}$ with smallest logradius such that $p\in A(j^*,r^*)$. By Lemma~\ref{lemma:area-contained-in-big-ball}, 
$p\in B(j^*,7\cdot 5^{r^*}).$ Thus we will first find all pairs $\tuple{j,r}$ such that $p\in B(j,7\cdot 5^{r})$, based on them determine $r^*$, and then look for $\tuple{j^*,r^*}$ in that set of balls.  
Thanks to Lemma~\ref{lemma:inclusion-of-big-balls}, the first part can be done using a simple recursive algorithm starting from the root of $\Tau_A$ (see below). The second part simply uses the definition of areas,
i.e., it finds the pair $\tuple{j^*,r^*}$ where $j^*$ has with minimum distance to $p$ out of all pairs $\tuple{j,r^*}$ with$p \in B(j,7\cdot 5^{r^*}).$

\begin{center}
\fbox{\parbox{\textwidth}{
$$\text{Pairs}(p,j,r)=\begin{cases}
\emptyset \text{ if } p\notin B(j,7\cdot 5^{r})\\
\{ \tuple{j,r}\} \cup \bigcup \{ \text{Pairs}(p,j',r-1) : \parent(j',r-1)=\tuple{j,r} \} \text{ otherwise.}
\end{cases}$$
\begin{itemize}
\item let $r^*$ be minimum such that there exists pairs $\tuple{j,r^*}$ in the set $\text{Pairs}(p,j_{\text{root}},\maxR)$. 
\item Among all such pairs, output the pair $\tuple{j^*,r^*}$ minimizing $\dist (p,j^*)$. 
\end{itemize}
}}
\end{center}

The running time is dominated by the first part, which is $\Oh(2^{4\kappa})$ times the number of pairs $\tuple{j,r}$ such that $p\in B(j,7\cdot 5^{r})$.
There are $\log(W/\fmin)$ possible values of $r$. 
For each $r$, 
 by Lemma~\ref{lem:doubling_dimension} there are at most  $2^{2\kappa}$ pairs  $\tuple{j,r}\in \Pairs$ such that $p\in B(j,c_2\cdot 5^{r})$ and the
algorithm has to test the $\Oh(2^{4 \kappa})$ children of each of them. 
Thus the running time to do an insertion is $\Oh(2^{6 \kappa} \log(W / f_{min})) $.

\clearpage
\bibliographystyle{plainurl}
\bibliography{main}

\end{document}